\definecolor{webgreen}{rgb}{0,.5,0}
\definecolor{webbrown}{rgb}{.6,0,0}
\newcommand{\seqnum}[1]{\href{https://oeis.org/#1}{\rm \underline{#1}}}
\def\suchthat{\, : \, }
\DeclareMathOperator{\cmp}{cmp}
\DeclareMathOperator{\link}{link}
\DeclareMathOperator{\pref}{pref}
\DeclareMathOperator{\boolean}{boolean}
\DeclareMathOperator{\faceq}{faceq}
\DeclareMathOperator{\rtspec}{rtspec}
\DeclareMathOperator{\power}{power3}
\DeclareMathOperator{\bnd}{bnd}
\DeclareMathOperator{\per}{per}
\DeclareMathOperator{\differ}{differ}
\begin{document}

\theoremstyle{plain}
\newtheorem{theorem}{Theorem}
\newtheorem{corollary}[theorem]{Corollary}
\newtheorem{lemma}[theorem]{Lemma}
\newtheorem{proposition}[theorem]{Proposition}

\theoremstyle{definition}
\newtheorem{definition}[theorem]{Definition}
\newtheorem{example}[theorem]{Example}
\newtheorem{conjecture}[theorem]{Conjecture}

\theoremstyle{remark}
\newtheorem{remark}[theorem]{Remark}

\title{Properties of a Class of Toeplitz Words}

\author{Gabriele Fici\\
Dipartimento di Matematica e Informatica\\
Universit\`a di Palermo\\
Italy\\
\href{mailto:gabriele.fici@unipa.it}{\tt gabriele.fici@unipa.it}\\
\ \\
Jeffrey Shallit\footnote{Research funded by a grant from NSERC, 2018-04118.}\\
School of Computer Science\\
University of Waterloo\\
Waterloo, ON  N2L 3G1\\
Canada\\
\href{mailto:shallit@uwaterloo.ca}{\tt shallit@uwaterloo.ca}
}

\maketitle

\begin{abstract}
We study the properties of the uncountable set of
{\it Stewart words}.   These are 
Toeplitz words specified by 
infinite sequences of Toeplitz patterns of the form $\alpha\beta\gamma$, where
$\alpha,\beta,\gamma$ is any permutation of the symbols {\tt 0,1,?}.
We determine the critical exponent of the Stewart words, prove
that they avoid the pattern $xxyyxx$, find all factors that
are palindromes, and determine their
subword complexity.  An interesting aspect of our work is
that we use automata-theoretic methods and a decision procedure
for automata to carry out the proofs.

\bigskip

\noindent 2020 AMS MSC Classifications:   11B85, 68R15, 03D05.
\end{abstract}

\section{Introduction}

{\it Toeplitz words\/} are infinite words formed by
an iterative process, specified by a sequence  $(t_i)_{i\geq 1}$
of one or more 
{\it Toeplitz patterns} \cite{Jacobs&Keane:1969,Cassaigne&Karhumaki:1997}.   A Toeplitz pattern is a finite
word over the alphabet $\Sigma \, \cup \, \{ \mbox{\tt ?} \}$,
where $\Sigma$ is a finite alphabet and
{\tt ?} is a distinguished symbol.

We start by
constructing the infinite word 
$t_1^\omega = t_1t_1t_1\cdots$ by
repeating $t_1$ infinitely.   Next, we consider the locations of
the {\tt ?} symbols in $t_1^\omega$ and replace all of the
corresponding {\tt ?} symbols by $t_2^\omega$, and so forth.

As an example, consider using only the single Toeplitz pattern {\tt 0?1?}.
At the first stage we get
$$ \mbox{{\tt 0?1?0?1?0?1?0?1?0?1?0?1?0?1?0?1?}} \cdots .$$
At the second stage we get
$$ \mbox{{\tt 001?011?001?011?001?011?001?011?} }\cdots .$$
At the third stage we get
$$ \mbox{{\tt 0010011?0011011?0010011?0011011?} }\cdots .$$
At the fourth stage we get
$$ \mbox{{\tt 001001100011011?001001110011011?} }\cdots.$$
The limit of this process yields the infinite word
$$ \mbox{{\tt 00100110001101100010011100110110} }\cdots,$$
known as the {\it regular paperfolding
word}.

On the other hand, instead of a single pattern, at each stage we
could choose between the two patterns {\tt 0?1?} and {\tt 1?0?}.
In this case we get an uncountable set of infinite words,
called the {\it paperfolding words}.  They share many of the
same properties; see, for example,
\cite{Dekking&MendesFrance&vanderPoorten:1982}.

In this paper we consider another uncountable set of Toeplitz words that we call {\it Stewart words}.   Here we can choose, at each stage, any of the six pattern of the
form $\alpha\beta\gamma$, where $\{\alpha,\beta,\gamma\}$ is some permutation of the
three symbols {\tt 0,1,?} and that we call 
{\it Stewart patterns\/}: 
\begin{align*}
{\tt a} &= \mbox{\tt 01?}; &\quad
{\tt b} &= \mbox{\tt 10?}; \\
{\tt c} &= \mbox{\tt 0?1}; &\quad
{\tt d} &= \mbox{\tt 1?0}; \\
{\tt e} &= \mbox{\tt ?01}; &\quad
{\tt f} &= \mbox{\tt ?10}.
\end{align*}

Some choices of these patterns correspond to well-known
sequences.  We give two examples now.
\begin{itemize}
\item
The pattern sequence ${\tt c}^\omega = {\tt ccc} \cdots$
specifies the so-called {\it Stewart choral sequence} \cite{Stewart:2006}
$$ {\tt 001001011} \cdots,$$
the fixed point of the morphism
${\tt 0} \rightarrow {\tt 001}, \quad {\tt 1} \rightarrow {\tt 011}$.
This is sequence \seqnum{A116178} in the
{\it On-Line Encyclopedia of Integer Sequences}
(OEIS) \cite{Sloane:2021}.   This explains our choice of the term ``Stewart word''.
It was also mentioned by Cassaigne and
Karhum{\"aki} \cite[Example 4]{Cassaigne&Karhumaki:1997} and
Berstel and Karhum{\"a}ki
\cite[pp.~196--197]{Berstel&Karhumaki:2003}.

\item The pattern sequence ${\tt (ab)}^\omega = {\tt ababab} \cdots$ specifies the so-called
{\it Sierpi\'nski gasket word}
$$ {\tt 011010010} \cdots,$$
the fixed point of the morphism
${\tt 0} \rightarrow {\tt 011}, \quad {\tt 1} \rightarrow {\tt 010}$.
This is sequence \seqnum{A156595} in the
{\it On-Line Encyclopedia of Integer Sequences}
(OEIS) \cite{Sloane:2021}.

\item The eight ``generalized choral sequences'' of Noche \cite{Noche:2008,Noche:2008b,Noche:2011}
are (up to renaming the first letter) 
the six sequences specified by
${\tt a}^\omega$, 
${\tt b}^\omega$,
${\tt c}^\omega$, 
${\tt d}^\omega$,
${\tt e}^\omega$, 
${\tt f}^\omega$.\footnote{In the case that a Stewart word is specified by an infinite word with a suffix in $\{ {\tt e,f} \}^\omega$, and only in this case, the limit of the Toeplitz construction is an infinite word containing a single occurrence of {\tt ?}. In this special case, there are two distinct Stewart words, obtained by replacing this single occurrence of {\tt ?} with $\tt 0$ and $\tt 1$, respectively.} For these specific sequences, Noche proved some of the same properties we discuss here.
\end{itemize}

The Stewart words, specified by {\it arbitrary\/} infinite sequences of Stewart
patterns in 
$$\{ {\tt a,b,c,d,e,f} \}^\omega,$$
share many properties, as conjectured
by the first author \cite{Fici:2021}.    
In this article we prove these properties, and others.

An interesting feature of our proofs is that they are based
on automata and expressing the desired property in first-order logic, and
then using a decision procedure to prove the
result ``automatically''.   We emphasize that this method permits us to
handle sequences generated by 
{\it all\/} infinite sequences of
the Stewart patterns, not just the periodic ones.   The same idea
was used previously to analyze the paperfolding sequences
\cite{Goc&Mousavi&Schaeffer&Shallit:2015}.

\section{Notation}

We define $\Sigma_k = \{ {\tt 0,1,} \ldots, k-1 \}$.  We use the concept of {\it deterministic finite automaton with output}, the
DFAO, from \cite{Allouche&Shallit:2003}.   This is a 
6-tuple $(Q, \Sigma, \Delta, \delta, q_0, \tau)$, where
the output on input $x$ is $\tau(\delta(q_0,x))$.   For
an infinite word ${\bf t} = t_0 t_1 t_2 \cdots$ we
define ${\bf t}[i] = t_i$, and the analogous notation for finite
words.  We write ${\bf t}[i..j] = t_i t_{i+1} \cdots t_j$.

For two infinite words ${\bf t} = t_0 t_1 t_2 \cdots$ and
${\bf u} = u_0 u_1 u_2 \cdots$ we define
${\bf t} \times {\bf u}$ to be the infinite word
$[t_0, u_0] [t_1, u_1] [t_2, u_2] \cdots$, and the same
for two finite words of the same length.   We use italic letters for finite
words and bold-face for infinite words.

\section{Finite Toeplitz words and the Stewart automaton}

We start by considering {\it finite\/} sequences of Stewart patterns
and the {\it finite\/} prefixes of the Stewart words they specify.

Let $t$ be a finite word over the alphabet
$A := \{ {\tt a,b,c,d,e,f } \}$.  Such a finite word defines
an infinite Stewart word $s(t)$
over the alphabet $B := \{ \mbox{\tt 0,1,?} \}$
as explained above.   However, we will find it useful to instead define
the map
$ T(t)$ to be the
{\it finite prefix} of length $3^{|t|}$ of $s(t)$.
Thus, for example, 
$T({\tt af}) = {\tt 01?011010}$ and
$T({\tt afe}) = {\tt 01?011010010011010011011010}$.

Alternatively, for a finite word $t \in A^*$ we can define $T(t)$ as follows:
We have $T(\epsilon) = \mbox{\tt ?}$;  given $y = T(t)$, we
define $T(tg)$ for a single letter $g \in A$ by replacing the three
\mbox{\tt ?} symbols in $y^3$, in order,  by the three symbols $T(g)$.  

\begin{theorem}
There is a 3-state DFAO $M$ over the alphabet
$A \times \Sigma_3$ that, takes as input a word where
the first component spells out the finite sequence $t$
of Stewart patterns,
and the second component spells out $n$, expressed in base 3 (with
the least significant digit first), and computes $T(t)[n]$,
the $n$'th symbol of $T(t)$, where we index
starting at position $0$.  Here the name of the state is also
its output.
\end{theorem}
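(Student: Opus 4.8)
The plan is to write down the automaton $M$ explicitly and then prove it correct by induction on $|t|$, using a recursion that ``peels off'' the first Stewart pattern. For $g\in A$ let $p(g)\in\{0,1,2\}$ be the index of the symbol $\texttt?$ in $g$, so that $p(\texttt a)=p(\texttt b)=2$, $p(\texttt c)=p(\texttt d)=1$, $p(\texttt e)=p(\texttt f)=0$; for $i\ne p(g)$ let $g[i]\in\{\texttt0,\texttt1\}$ denote the $i$-th symbol of $g$. The automaton is $M=(Q,\,A\times\Sigma_3,\,Q,\,\delta,\,\texttt?,\,\tau)$, where $Q=\{\texttt0,\texttt1,\texttt?\}$, the output map $\tau$ is the identity, the states $\texttt0$ and $\texttt1$ are absorbing (so $\delta(\texttt0,\sigma)=\texttt0$ and $\delta(\texttt1,\sigma)=\texttt1$ for every $\sigma\in A\times\Sigma_3$), and from the start state $\texttt?$ we set
\[
\delta(\texttt?,(g,d)) \;=\;
\begin{cases}
\texttt? , & d = p(g);\\
g[d], & d \ne p(g).
\end{cases}
\]
Intuitively, being in state $\texttt?$ records that the symbol being looked up currently lies inside a copy of some $\texttt?$; as soon as a base-$3$ digit of $n$ disagrees with the $\texttt?$-position of the corresponding pattern, that symbol is pinned down to the constant $g[d]$, and nothing read later can change it.

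Correctness rests on the following recursion, which I would record as a short lemma beforehand: for every $g\in A$, every $t'\in A^*$, and every $n$ with $0\le n<3^{1+|t'|}$,
\[
T(gt')[n] \;=\;
\begin{cases}
g[\,n\bmod 3\,], & n\bmod 3\ne p(g);\\
T(t')[\,\lfloor n/3\rfloor\,], & n\bmod 3 = p(g).
\end{cases}
\]
This is essentially the definition of the Toeplitz construction: in $s(gt')$ the symbol $\texttt?$ of $g$ occupies exactly the positions $\equiv p(g)\pmod 3$, and the $k$-th such position (counting from $0$) is filled by $s(t')[k]$; since $3k+p(g)=n$ is equivalent to $k=\lfloor n/3\rfloor$ when $n\equiv p(g)\pmod 3$, and since passing to the length-$3^{1+|t'|}$ prefix turns $s$ into $T$ and $\lfloor n/3\rfloor<3^{|t'|}$, one gets the displayed identity. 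One should also note $T(\epsilon)=\texttt?$, which is immediate from the definition.

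Granting the recursion, the theorem follows by induction on $|t|$. If $t=\epsilon$ then $n=0$, the input word is empty, and $M$ outputs $\tau(\texttt?)=\texttt?=T(\epsilon)[0]$. If $t=gt'$, the first input letter is $(g,n_0)$ with $n_0=n\bmod 3$. If $n_0\ne p(g)$, then $M$ enters the absorbing state $g[n_0]$ and its final output is $g[n_0]=T(gt')[n]$ by the recursion. If $n_0=p(g)$, then $M$ returns to state $\texttt?$ and the remaining input is exactly the encoding of the pair $(t',\lfloor n/3\rfloor)$ with $\lfloor n/3\rfloor<3^{|t'|}$, so by the induction hypothesis $M$ outputs $T(t')[\lfloor n/3\rfloor]=T(gt')[n]$, again by the recursion. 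In particular, if $M$ is still in state $\texttt?$ after consuming all $|t|$ input letters — equivalently, every base-$3$ digit of $n$ matched the $\texttt?$-position of its pattern, so $\lfloor n/3^{|t|}\rfloor=0$ — then iterating the recursion down to $T(\epsilon)$ yields $T(t)[n]=T(\epsilon)[0]=\texttt?$, matching the output.

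The only genuine subtlety is the bookkeeping forced by the base-$3$ encoding: because the first component of the input has length exactly $|t|$, the integer $n$ is written with exactly $|t|$ base-$3$ digits, i.e.\ zero-padded on the high end, and one must check that $M$ handles these padding digits correctly. It does: reading $(g,0)$ from state $\texttt?$ leaves $M$ in $\texttt?$ precisely when $p(g)=0$, which is exactly what one more step of the peeling recursion prescribes. Everything else is routine induction; the real content of the theorem is simply that the Toeplitz construction for Stewart patterns is governed by the three-way recursion above, which is exactly what a $3$-state DFAO reading $n$ in base $3$ with least significant digit first is able to track.
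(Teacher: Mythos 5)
Your proposal is correct and takes essentially the same route as the paper: your explicit transition rule (stay in $\texttt{?}$ when the digit hits the $\texttt{?}$-position of the pattern, otherwise fall into the absorbing state named by the pattern's letter) is exactly the automaton the paper exhibits in Figure~\ref{fig1} (and in the appendix file {\tt TP.txt}), and your induction on $|t|$ via the peel-off-the-first-pattern recursion is precisely the ``easy induction'' the paper invokes without spelling out. You simply supply the details the paper omits, including the correct handling of the lsd-first, zero-padded base-$3$ digits.
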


\begin{proof}
The DFAO $M$ is depicted in Figure~\ref{fig1}.  Asterisks match all possible inputs.
\begin{figure}[H]
\begin{center}
\includegraphics[width=4in]{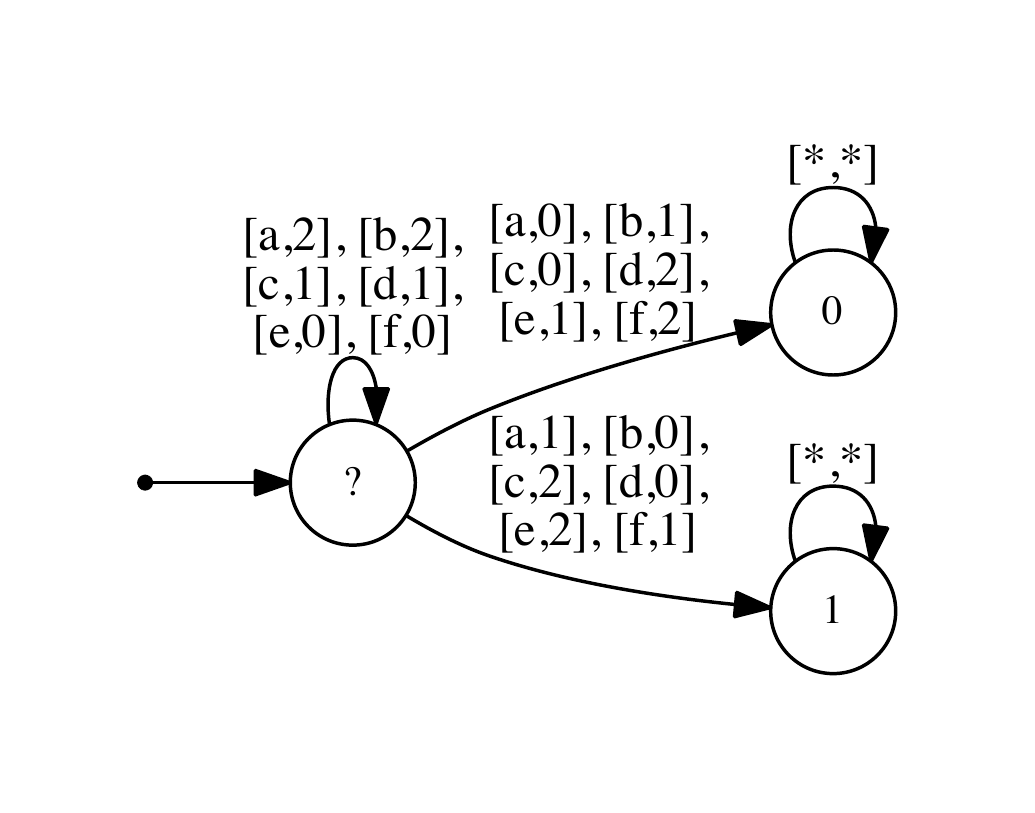}
\end{center}
\caption{The Stewart automaton $M$.}
\label{fig1}
\end{figure}
To see that the Stewart automaton works as claimed, an easy induction shows
that if $n = \sum_{0 \leq i < r} e_i 3^i$ with $e_i \in \Sigma_3$, and
$\delta$ is the transition function of $M$,
then 
$$\delta(\mbox{\tt ?}, [t_0, e_0][t_1, e_1] \cdots
[t_{r-1}, e_{r-1}]) = T(t_0 t_1 \cdots t_{r-1}) [n] .$$
\end{proof}

Let us agree to call $\tt 0$ and $\tt 1$ {\it boolean\/} symbols, in
contrast with the {\tt ?} symbol.   A factor of a finite Toeplitz word is called boolean if all of its symbols are boolean.

\section{Decision procedures and {\tt Walnut}}

The strategy in this paper is to express our conjectures as first-order logic formulas about finite Toeplitz words $T(t)$.  Because we have an automaton computing $T(t)$, these formulas can be proved or disproved using a well-known extension of Presburger arithmetic \cite{Bruyere&Hansel&Michaux&Villemaire:1994}.

We use {\tt Walnut} \cite{Mousavi:2016}, a free software theorem-prover for first-order statements about automata and the sequences they compute.   For more about {\tt Walnut},
see \url{https://cs.uwaterloo.ca/~shallit/walnut.html}.

Here is a brief guide to the syntax of {\tt Walnut}.  All variables
refer to natural numbers.   
\begin{itemize}
    \item {\tt ?lsd\_}$x$ instructs {\tt Walnut} to evaluate the formula where the default base of representation of integers is $x$.  In our paper $x$ can be either $3$ or $7$.  This instruction has local scope.
    
    \item {\tt A} represents the universal quantifier $\forall$;
    \item {\tt E} represents the existential quantifier $\exists$;
    \item {\tt \&} represents logical ``and'' ($\wedge$);
    \item {\tt |} represents logical ``or'' ($\vee$);
    \item {\tt eval} evaluates a logical formula with no free variables and decides if it is true or false;
    \item {\tt def} defines an automaton for a logical formula
    for later use;
    \item {\tt reg} allows one to define an automaton from a 
    regular expression, for later use;
    \item {\tt TP[t][n]} represents $T(t)[n]$.
\end{itemize}

\section{Implementing the Stewart automaton in {\tt Walnut}}

When we implement the DFAO $M$ in {\tt Walnut}, some minor modifications are needed.  Because the outputs of automata in {\tt Walnut} must be integers and not letters, 
we have to change the coding of letters from {\tt a,b,c,d,e,f} to
numbers; we use $1,2,3,4,5,6$ instead.   We do not use $0$ as one of the six letter codes, in order that
finite Stewart pattern sequences be allowed to end in arbitrarily many $0$'s; this 
permits an automaton
to compare pattern sequences of different lengths (by padding the shorter with 
trailing zeros) and to determine if one is a prefix of another.   Therefore, a finite Stewart pattern sequence $t$ can be written as any element of $t0^*$.
The automaton $M$ in {\tt Walnut}
is stored under the name {\tt TP.txt} in the
{\tt Word Automata Library}.  We identify a finite Stewart pattern sequence with the integer it represents in base $7$, remembering that all integers in this paper are represented starting with the least significant digit.

We also have to change the output alphabet of the automaton, replacing {\tt ?} with $2$.  This gives the automaton in Figure~\ref{fig6}.  A state is labeled in the form ``state name/output''.  Here
$F = \{1,2,3,4,5,6\}$ and $G = \Sigma_3$.
\begin{figure}[htb]
\begin{center}
\includegraphics[width=4in]{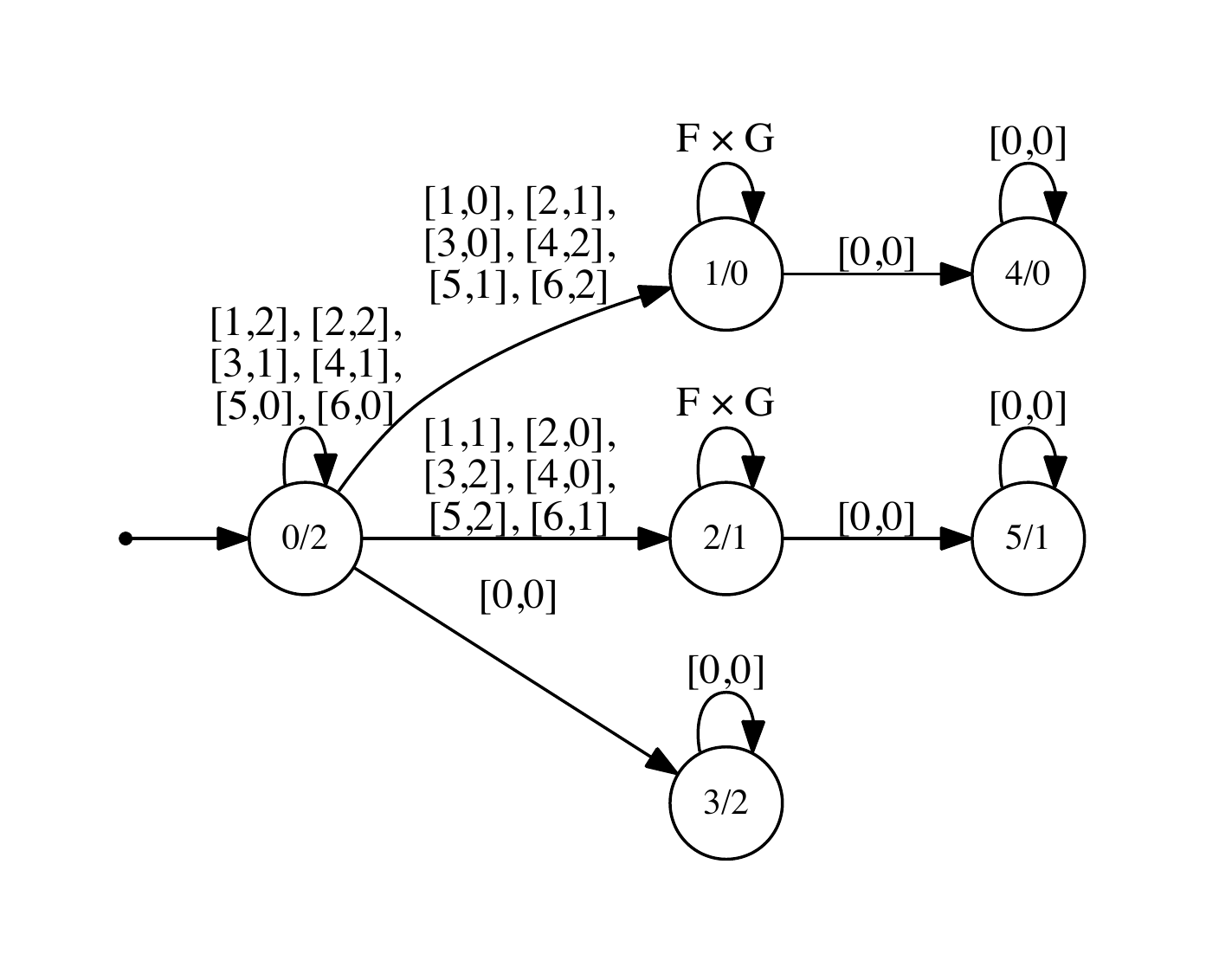}
\end{center}
\caption{The automaton {\tt TP.txt} in {\tt Walnut}.}
\label{fig6}
\end{figure}

One of our main ideas is that to prove many statements about sequences generated by infinite sequences
of Stewart patterns, it suffices to study {\it finite\/}
sequences of Stewart patterns only.  This is implied by the following lemma.
\begin{lemma}
Let $t_1 t_2 \cdots$ be an infinite sequence of Stewart patterns.
Then, for all $r \geq 1$, there exists a letter
$h \in A$ such that the prefix of
$T(t_1 t_2 \cdots t_{r-1} h)$ of 
length $3^{r-1}$ is a prefix of $T(t_1 t_2 \cdots)$.
\end{lemma}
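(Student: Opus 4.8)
The plan is to understand exactly what happens to the prefix of length $3^{r-1}$ of the infinite word $T(t_1t_2\cdots)$ as we go from stage $r-1$ to stage $r$ in the Toeplitz construction. After processing the patterns $t_1,\dots,t_{r-1}$, the finite word $T(t_1\cdots t_{r-1})$ has length $3^{r-1}$, and its only non-boolean entries are {\tt ?} symbols; these {\tt ?}'s sit at positions that, in the infinite word, will eventually be filled in by $t_r, t_{r+1}, \dots$. Passing from $T(t_1\cdots t_{r-1})$ to $T(t_1\cdots t_{r-1}h)$ for a single letter $h \in A$ takes $y := T(t_1\cdots t_{r-1})$, forms $y^3$, and replaces the three {\tt ?}'s of $y^3$ (in order) by the three symbols of $T(h)$. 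Only the first occurrence of $y$ in $y^3$ lies within the first $3^{r-1}$ positions, and within that first block there is at most one {\tt ?} symbol (since $y$ has length $3^{r-1}$ and the three {\tt ?}'s are spread across the three copies of $y$ in $y^3$ — more precisely, each copy of $y$ contributes exactly one {\tt ?} to $y^3$, because $y$ itself has exactly one {\tt ?} if $r\ge 2$, and if $r=1$ the statement is immediate).

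First I would verify the base-case bookkeeping: $T(\epsilon) = {\tt ?}$, which has exactly one {\tt ?}; and by the alternative recursive definition of $T$ given in the excerpt, $T(tg)$ is obtained from $(T(t))^3$ by replacing its three {\tt ?}'s with the three symbols of $T(g)$, so $T(tg)$ has exactly as many {\tt ?}'s as $T(g)$ does, namely exactly one. Hence for every nonempty word $w \in A^*$, $T(w)$ contains exactly one {\tt ?}. Now take $y = T(t_1\cdots t_{r-1})$, with its single {\tt ?} at some position $p < 3^{r-1}$. In the infinite construction, that {\tt ?} at position $p$ gets replaced by the first symbol of $T(t_r)$. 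So the prefix of length $3^{r-1}$ of $T(t_1t_2\cdots)$ agrees with $y$ everywhere except at position $p$, where it carries the first letter of $T(t_r)$, which is the first letter of the pattern $t_r$ itself (a boolean symbol, since every pattern in $A$ begins with {\tt 0} or {\tt 1}, except {\tt e} and {\tt f} which begin with {\tt ?} — here I'd note that the first letter of $T(t_r)$ is {\tt 0} if $t_r \in \{{\tt a,c}\}$... actually let me be careful: it is the first boolean symbol that ends up at position $p$ only after all later patterns are resolved, so in fact the symbol at position $p$ equals $s(t_rt_{r+1}\cdots)[0]$, which is a fixed boolean value $b \in \{0,1\}$ depending on the tail).

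The key step, then, is: choose $h \in A$ so that the first symbol of $T(h)$ equals that boolean value $b$. Since among the six patterns {\tt a,b,c,d,e,f} we have patterns beginning (after full resolution, i.e. $s(h)[0]$) with {\tt 0} and patterns beginning with {\tt 1} — e.g. $s({\tt c})$ starts with {\tt 0} and $s({\tt d})$ starts with {\tt 1}, as do {\tt a} resp.\ {\tt b} — such an $h$ always exists. With this choice, $T(t_1\cdots t_{r-1}h)$ agrees with $y$ except that its {\tt ?} at position $p$ has been replaced by the first symbol of $T(h) = b$, which is exactly what the infinite word has at position $p$. Therefore the length-$3^{r-1}$ prefix of $T(t_1\cdots t_{r-1}h)$ equals the length-$3^{r-1}$ prefix of $T(t_1t_2\cdots)$, as required. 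The one subtlety — the main obstacle — is pinning down precisely which position the unique {\tt ?} occupies and confirming that only that single position changes when we append $h$ (versus appending the ``true'' continuation $t_r t_{r+1}\cdots$), rather than other positions being disturbed; this follows because in $y^3$ the second and third copies of $y$ lie entirely beyond position $3^{r-1}$, so the replacements happening there are irrelevant to the prefix, and within the first copy only the one {\tt ?} is touched.
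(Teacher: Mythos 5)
Your argument is correct and is essentially the paper's own (very terse) proof, spelled out: the unique {\tt ?} in $T(t_1\cdots t_{r-1})$ is the only position of the length-$3^{r-1}$ prefix that can change, and one chooses $h\in A$ whose first symbol matches whatever that position eventually becomes. The only imprecision is your claim that the resolved value $s(t_rt_{r+1}\cdots)[0]$ is always a boolean $b\in\{0,1\}$ --- when the tail lies in $\{{\tt e,f}\}^\omega$ it is {\tt ?} (or, under the paper's footnote convention, an arbitrarily chosen boolean) --- but in either case a suitable $h$ (e.g., ${\tt e}$ or ${\tt f}$, resp.\ a pattern starting with the chosen boolean) exists, so the argument stands.
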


\begin{proof}
There is only one {\tt ?} symbol in $T(t_1 t_2\cdots t_{r-1})$ that
can be substituted by later patterns, and whatever substitution
is made is accounted for by a letter of $A$.
\end{proof}

The lemma tells us that if we study the finite sequences arising from
Stewart patterns of length $r$, then the first $3^{r-1}$ symbols
of such sequences cover all possibilities for the prefixes of
length $3^{r-1}$ of the sequences arising from all infinite sequences
of Stewart patterns.

Even more than this, we can prove the following:
\begin{theorem}
Let $n\geq 1, \ell \geq 2$ be integers with $n \leq 3^{\ell -2}$, and 
let $t\in A^\ell$ be a length-$\ell$ sequence of Stewart codes.  If
$t$ is a prefix of $u$, 
then every length-$n$ boolean factor
that appears in $T(u)$ also appears in $T(t)$.
\label{thm3}
\end{theorem}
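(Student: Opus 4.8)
The plan is to reduce to a \emph{finite} word $u$ and then to peel two Stewart patterns off the end of $t$, exposing a self-similar block decomposition of $T(u)$ that is coarse enough to be controlled by $T(t)$. First I would dispose of infinite $u$: any factor occurring in $T(u)$ occurs in a finite prefix of it, and by the preceding lemma that prefix is a prefix of $T(v)$ for some finite $v\in A^{*}$ still having $t$ as a prefix (take $v=u[0..r-1]\,h$ with $r\geq\ell$ and $3^{r-1}$ beyond the end of the occurrence), so it suffices to treat $u=tw$ with $w\in A^{*}$, say $|w|=m$. Next I would use the ``cube-and-substitute'' form of $T$: peeling the last $m+2$ letters off $u$ exhibits $T(u)$ as a concatenation of $3^{m+2}$ consecutive blocks of length $3^{\ell-2}$, each obtained from $T(\mu)$ by replacing its unique \texttt{?} with a symbol of $\{\texttt{0},\texttt{1},\texttt{?}\}$, where $\mu$ is the length-$(\ell-2)$ prefix of $t$. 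Writing $T_{c}(\mu)$ for $T(\mu)$ with \texttt{?} replaced by $c$, the sequence of these ``versions'' is exactly $T(\nu w)$, where $\nu$ is the two-letter suffix of $t$; applying the same bookkeeping to $T(t)$ itself gives $T(t)=T_{d_{0}}(\mu)\,T_{d_{1}}(\mu)\cdots T_{d_{8}}(\mu)$ with $d_{0}d_{1}\cdots d_{8}=T(\nu)$, a word of length $9$.

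The combinatorial heart of the proof would be a finite fact, verifiable directly over the $36$ choices of $\nu$: for every $\nu\in A^{2}$ the central three letters $T(\nu)[3..5]$ contain both \texttt{0} and \texttt{1}. Indeed $T(\nu)$ is $T(\nu[0])^{3}$ with its three \texttt{?}'s substituted, and the middle copy of the three-letter pattern $T(\nu[0])$ keeps its two boolean letters. From this I would conclude that each of $T_{0}(\mu)$, $T_{1}(\mu)$, $T(\mu)$ occurs among the nine blocks of $T(t)$, and that $T_{0}(\mu)$ and $T_{1}(\mu)$ each occur as a block of $T(t)$ that is neither the first nor the last.

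Then the case analysis. Let $x$ be a boolean factor of $T(u)$ with $|x|=n\leq 3^{\ell-2}$, which is exactly the block length, so $x$ lies inside one block or straddles exactly one block boundary. If $x$ lies inside one block, it is a factor of some $T_{c}(\mu)$ with $c\in\{\texttt{0},\texttt{1},\texttt{?}\}$ ($c=\texttt{?}$ only when $x$ avoids the \texttt{?} position), and since $T_{c}(\mu)$ is a block of $T(t)$ the factor occurs in $T(t)$. If $x$ straddles a boundary, write $x=x'x''$ with $x'$ a proper boolean suffix of the left block and $x''$ a proper boolean prefix of the right block; since $|x'|+|x''|=n\leq 3^{\ell-2}$, a length count forces at most one of $x'$, $x''$ to meet the (former) \texttt{?} position of its block. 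If neither does, $x'$ is a suffix and $x''$ a prefix of \emph{every} version of $T(\mu)$, so $x$ occurs at any seam of $T(t)$. If $x'$ meets it, its block is $T_{c}(\mu)$ with $c\in\{\texttt{0},\texttt{1}\}$, which occurs as a non-last block of $T(t)$; the block of $T(t)$ immediately after that occurrence has $x''$ as a prefix (since $x''$, avoiding the \texttt{?} position, is a prefix of every version of $T(\mu)$), so $x=x'x''$ occurs at that seam. The case where $x''$ meets the \texttt{?} position is symmetric, using a non-first occurrence of $T_{c}(\mu)$.

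I expect the main obstacle to be isolating that ``central letters'' fact and recognizing it as precisely what lets a boolean factor crossing a former-\texttt{?} position of a block still be located inside $T(t)$; it is also the reason the bound must be $3^{\ell-2}$, i.e.\ two levels of unfolding, rather than $3^{\ell-1}$, since one level up the analogous claim, about the single middle letter of a three-letter pattern, is false.
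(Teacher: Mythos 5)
Your proof is correct, but it proceeds by a route genuinely different from the paper's. The paper proves Theorem~\ref{thm3} purely mechanically: it encodes the statement in first-order logic over the automaton {\tt TP}, builds the auxiliary predicates ({\tt cmp}, {\tt boolean}, {\tt pref}, {\tt link}), and has {\tt Walnut} decide the resulting sentence --- the bottleneck being the 431-state {\tt cmp} automaton (76 GB of RAM). You instead give a self-contained combinatorial argument: the block decomposition $T(\mu v)=\prod_i T_{c_i}(\mu)$ with $c_i=T(v)[i]$ (which is exactly the inductive definition of $T$, and is easy to verify by induction on $|v|$), the finite observation that for every $\nu\in A^2$ the middle three letters of $T(\nu)$ contain both {\tt 0} and {\tt 1} (so $T_{\tt 0}(\mu)$ and $T_{\tt 1}(\mu)$ occur as interior blocks of $T(t)$), and a seam analysis in which a boolean factor of length $n\le 3^{\ell-2}$ meets at most one block boundary and at most one former-{\tt ?} position, which I checked case by case and found sound. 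What your approach buys is an explanation: it makes transparent why the exponent must be $\ell-2$ rather than $\ell-1$, it needs no heavy computation, and it handles infinite $u$ directly via Lemma~2 (only a trivial index adjustment is needed there: take the prefix length $r-1\ge\ell$ so that the truncated word still has $t$ as a prefix). What the paper's approach buys is uniformity with the rest of the automated framework --- the same {\tt cmp} machinery is reused for later theorems (e.g., common factors), and a one-line variant of the query establishes the optimality of the bound $3^{\ell-2}$ mentioned in the remark, which your argument would need additional explicit examples to show.
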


\begin{proof}
We can prove this with {\tt Walnut}.  We need to define various formulas
useful in expressing the assertion of Theorem~\ref{thm3} in first-order logic:
\begin{itemize}

    \item $\faceq(i,j,n,t)$ expresses the assertion that
    $T(t)[i..i+n-1]=T(t)[j..j+n-1]$, for integers
    $i, j, n$ and a finite sequence of Stewart patterns
    $t$.
    
    \item $\cmp(i, j, n, t, u)$ expresses the assertion that
    $T(t)[i..i+n-1] = T(u)[j..j+n-1]$, for integers
    $i, j, n$ and finite sequences of Stewart patterns
    $t$ and $u$.
    
    \item $\boolean(i,n,t)$ expresses the assertion that
    $T(t)[i..i+n-1]$ is boolean (contains no {\tt ?}).
    
    \item $\pref(t, u)$ expresses the assertion that
    $t$ is a prefix of $u$.  For the purpose of this
    relationship, trailing zeros in $t$ and $u$ are ignored.
    
    \item $\link(x,t)$ expresses the assertion that $x$ is an
    integer satisfying $x = 3^{|t|}$.   This links the base-$3$
    representation of $x$ with the base-$7$ representation of $t$.
    
    \item $\bnd(x,y)$ expresses the assertion that
    $y = 3^{\lceil \log_3 x \rceil}$ for $x \geq 1$.
    
    \item $\power(x)$ expresses the assertion that $x \geq 1$
    is a power of $3$.
\end{itemize}

Here is the implementation of these subroutines in {\tt Walnut}.
\begin{verbatim}
def faceq "?lsd_3 Ak (k<n) => TP[t][i+k]=TP[t][j+k]":
# vars i,j,n,t
# 88 states, 28965 ms

def cmp "?lsd_3 Ak (k<n) => TP[t][i+k]=TP[u][j+k]":
# vars i,j,n,t,u
# 431 states, 76 Gigabytes of RAM and 2014067ms CPU

def boolean "?lsd_3 Ak (k<n) => (TP[t][i+k]=@0|TP[t][i+k]=@1)":
# vars i,n,t
# 24 states

reg pref lsd_7 lsd_7 "([1,1]|[2,2]|[3,3]|[4,4]|[5,5]|[6,6])*
([0,1]|[0,2]|[0,3]|[0,4]|[0,5]|[0,6])*[0,0]*":
# vars t1,t2
# 3 states

reg link lsd_3 lsd_7 "([0,1]|[0,2]|[0,3]|[0,4]|[0,5]|[0,6])*[1,0][0,0]*":
# vars x,t
# 2 states

reg bnd lsd_3 lsd_3 "([0,0]|[1,0]|[2,0])*[0,1][0,0]*|[0,0]*[1,1][0,0]*":
# vars x,y
# 3 states

reg power3 lsd_3 "0*10*":
# var x
\end{verbatim}

Finally, we translate the statement of Theorem~\ref{thm3} into
{\tt Walnut}.  Instead of representing $|t|$ (resp., $|u|$), we use the variable $x$ (resp., $y$) to represent $3^{|t|}$ (resp., $3^{|u|}$).
\begin{verbatim}
eval thm3 "?lsd_3 An,t,u,x,y,j (n>=1 & $link(x,t) & $link(y,u) & 
$pref(t,u) & n<=x/9 & j+n<=y & $boolean(j,n,u)) => Ei $cmp(i,j,n,t,u)":
# returns TRUE
# 24534 ms
\end{verbatim}
This returns {\tt TRUE}.  The expensive part of the computation is
the construction of the {\tt cmp} automaton, which required 76 Gigabytes of RAM and
1886 seconds of CPU time on a Linux machine. It has 431 states.
\end{proof}

\begin{remark}
It turns out that the bound $3^{\ell -2}$ is optimal for $n \geq 2$;
this can also be proved with {\tt Walnut}.  We omit the details.
\end{remark}

The rest of the paper uses the following strategy.  Instead of directly proving
results about an infinite Stewart word $T({\bf t})$ (which would require
B\"uchi automata), we must often instead {\it restate\/} our claims about infinite
Stewart words to be about their finite prefixes generated by a finite
prefix $t$ of $\bf t$.  This sometimes requires creative rephrasing involving
stronger statements.  For
example, instead of proving a particular kind of factor appears in
$T({\bf t})$,
we prove that it appears provided we look at a sufficiently long prefix, where
the length of the prefix is given explicitly in terms of the size of the factor.
This is where the estimate in Theorem~\ref{thm3} becomes so useful.  An
example is Lemma~\ref{lemma7} below.

\section{Palindromic factors of the Stewart words}

We are interested in what factors of Stewart words are
(nonempty) palindromes.
To do so, we write a first-order logical statement for the lengths
of such words.
$$ \exists t,i \ \forall j\ (j<n) \implies T(t)[i+j]=T(t)[i+n-j-1] .$$

When we implement this in {\tt Walnut} we get
\begin{verbatim}
def pal "?lsd_3 Et,i Aj (j<n) => TP[t][i+j]=TP[t][(i+n)-(j+1)]":
# 4 states, 471 ms
\end{verbatim}

The resulting automaton is depicted below in Figure~\ref{fig2}.
It gives the base-3 representation of the possible lengths of palindromes
occurring in the Stewart words, starting with the least significant digit.
\begin{figure}[H]
\begin{center}
\includegraphics[width=4in]{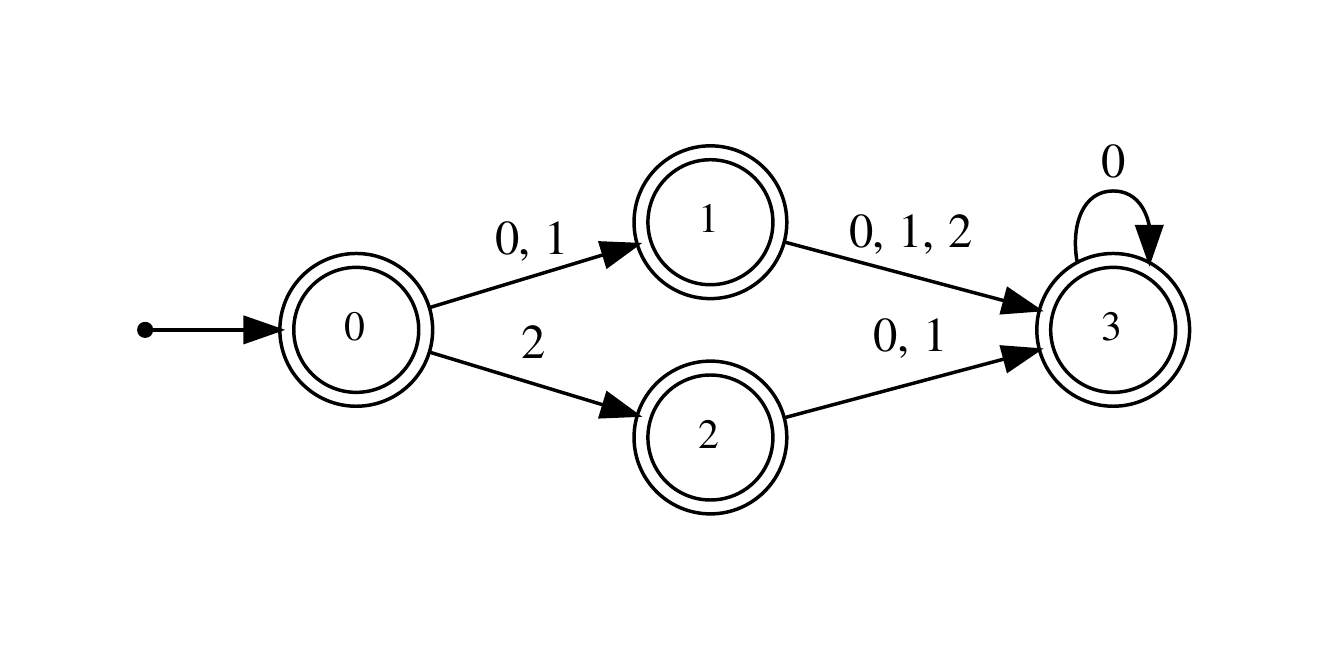}
\end{center}
\caption{Lengths of palindromes in the Stewart words.}
\label{fig2}
\end{figure}
As you can see from inspecting this automaton, the possible lengths
of palindromes are exactly $\{ 0,1,2,3,4,5,6,7 \}$.
Now we can prove
\begin{theorem}
The only possible palindromes occurring in the Stewart words
are as follows:
$$ \{ \epsilon, {\tt 0}, {\tt 1}, {\tt ?}, {\tt 00}, {\tt 11}, {\tt 010}, {\tt 101}, {\tt 0110}, {\tt 1001}, {\tt 00100}, {\tt 11011}, {\tt 010010},
{\tt 101101}, {\tt 0110110}, {\tt 1001001} \}.$$
Furthermore, each such palindrome occurs in the length-$81$ word specified by any
Stewart pattern sequence of length 4.
\end{theorem}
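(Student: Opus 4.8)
The plan is to treat the two assertions of the statement separately, each by appeal to \texttt{Walnut}. For the first assertion I would start from what is already in hand: the automaton \texttt{pal} of Figure~\ref{fig2} shows that every palindromic factor of a finite Toeplitz word $T(t)$ has length in $\{0,1,\ldots,7\}$, and by the Lemma immediately preceding Theorem~\ref{thm3} every factor of an infinite Stewart word $T(\mathbf{t})$ already occurs inside some finite $T(t')$, so the same length bound applies to the infinite Stewart words as well. It therefore suffices to run through the finitely many words over $\{\mathtt{0},\mathtt{1},\mathtt{?}\}$ of length at most $7$ that happen to be palindromes — there are $1+3+3+9+9+27+27+81 = 160$ of them, many of which are ruled out at once because $T(t)$ never contains two occurrences of $\mathtt{?}$ — and, for each candidate palindrome $p = p_0 p_1 \cdots p_{n-1}$, to decide with \texttt{Walnut} the statement
$$\exists t\ \exists i\ \bigwedge_{0 \le j < n} \bigl(T(t)[i+j] = p_j\bigr),$$
which expresses that $p$ occurs as a factor of some $T(t)$. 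I expect exactly the $16$ words in the displayed list to return \texttt{TRUE}; this is precisely the ``only possible palindromes'' claim. (Among these checks one confirms, for instance, that $\mathtt{000}$, $\mathtt{111}$, $\mathtt{0?0}$, and $\mathtt{1?1}$, which are palindromes not on the list, indeed never occur.)

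For the second assertion I would argue directly with \texttt{Walnut}. Recall that a finite Stewart pattern sequence is identified with the base-$7$ integer it represents under the $t \mapsto t0^{*}$ padding convention, and that $\link(x,t)$ asserts $x = 3^{|t|}$; hence $\link(81,t)$ holds exactly when $t$ has length $4$, i.e.\ $t \in A^4$, in which case $T(t)$ is the length-$3^4 = 81$ word specified by $t$. For each of the $16$ palindromes $p$ of length $n \le 7$ one then evaluates
$$\forall t\ \Bigl( \link(81,t) \implies \exists i\ \bigl( i + n \le 81 \ \wedge\ {\textstyle\bigwedge_{0 \le j < n}}\bigl(T(t)[i+j] = p_j\bigr) \bigr) \Bigr),$$
and I expect all $16$ to return \texttt{TRUE}; this is exactly the statement that every length-$4$ Stewart pattern sequence produces a length-$81$ word containing $p$. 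Note that Theorem~\ref{thm3} with $\ell = 4$ shows it would be \emph{enough} to know each $p$ occurs in $T(u)$ for \emph{some} $u$ with the prescribed length-$4$ prefix (since all boolean factors of length at most $3^{\ell-2} = 9 \ge 7$ are then already visible in that prefix); but establishing even that is no easier, so the direct universal check above is the cleaner route.

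I do not expect a genuine mathematical obstacle: each individual \texttt{Walnut} query is tiny, because the automaton \texttt{TP} has only three states, the automaton for $\link$ has two, and imposing $\link(81,t)$ confines all quantified variables to bounded ranges; decidability of every statement above is guaranteed by the theory underlying \texttt{Walnut}. The real work is organizational — scripting the $160$ occurrence queries for the first part and the $16$ universality queries for the second, and then reading off the list. The only things one must get right are the two bookkeeping points already flagged: that the length bound extracted from \texttt{pal} (whose quantifier ranges over finite $t$) does transfer to the infinite Stewart words, which is exactly what the Lemma before Theorem~\ref{thm3} provides; and that $\link(81,t)$ faithfully encodes ``$t \in A^4$'' given the trailing-zero padding, which is immediate from the regular expression defining \texttt{link}.
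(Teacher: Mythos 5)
Your proposal is correct and follows essentially the same route as the paper: the paper's argument is exactly the \texttt{pal} automaton bounding palindrome lengths to $\{0,1,\ldots,7\}$ (transferred from finite $T(t)$ to infinite Stewart words via the lemma preceding Theorem~\ref{thm3}), followed by a finite \texttt{Walnut}/enumeration check of which short palindromes occur and of their occurrence in the length-$81$ words $T(t)$, $t\in A^4$ --- a verification the paper leaves implicit but which your $160$ occurrence queries and $16$ universality queries carry out explicitly.
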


\section{The pattern $xxyyxx$}

Let us consider occurrences of the pattern $xxyyxx$ in Stewart words.   Here we assume that $x$ is nonempty, but
$y$ is allowed to be empty.

Say that $|x|=m$ and $|y|=n$.    
We can express the assertion that such a pattern occurs as follows:
\begin{align*}
& \exists t,i,m,n \ \ (m \geq 1) \ \wedge \\
& (\forall j \ (j<m) \implies T(t)[i+j]=T(t)[i+j+m]) \ \wedge \\
& (\forall j \ (j<2m) \implies T(t)[i+j]=T(t)[i+j+2m+2n] \ \wedge \\
& (\forall j \ (j<n) \implies T(t)[i+j+2m]=T(t)[i+j+2m+n] .
\end{align*}

The translation into {\tt Walnut} is
\begin{verbatim}
eval xxyyxx "?lsd_3 Et,i,m,n (m>=1) &
(Aj (j<m) => TP[t][i+j]=TP[t][i+j+m]) &
(Aj (j<2*m) => TP[t][i+j]=TP[t][i+j+2*m+2*n]) &
(Aj (j<n) => TP[t][i+j+2*m]=TP[t][i+j+2*m+n])":
# returns FALSE
# 37396 ms
\end{verbatim}
and {\tt Walnut} returns {\tt FALSE}, so there is no such pattern.

\section{Critical exponent of the Stewart words}

Let us recall the notions of period and exponent of a finite word.
If $w[i]=w[i+p]$ for $p \geq 1$ and $0 \leq i < |w|-p$, then
$w$ is said to have period $p$.   The smallest period
of $w$ is denoted $\per(w)$.   The exponent of $w$,
written $\exp(w)$, is defined to be $|w|/\per(w)$.
\begin{theorem}
Let ${\bf t}$ be an infinite sequence of Stewart
patterns.  Then the Stewart word $T({\bf t})$ has critical exponent
$3$, which is not attained by any finite factor.
\label{thm5}
\end{theorem}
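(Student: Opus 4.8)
The plan is to prove the two halves of the statement — ``$T(\mathbf t)$ has no factor of exponent $\geq 3$'' and ``$T(\mathbf t)$ has factors of exponent arbitrarily close to $3$ from below'' — separately, each by reducing to a first-order statement about the finite prefixes $T(t)$ and checking it with \texttt{Walnut}, in the spirit of the paragraph following the proof of Theorem~\ref{thm3}. Together these give that the critical exponent equals $3$ and that it is not attained by a finite factor, the latter being contained in the first half since a factor of exponent exactly $3$ is in particular a factor of exponent $\geq 3$.

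For the upper bound I would have \texttt{Walnut} evaluate the formula asserting that there exist $t,i,p$ with $p\geq 1$ and $T(t)[i+j]=T(t)[i+j+p]$ for all $j<2p$ — that is, that some finite prefix $T(t)$ contains a factor of period $p$ and length $3p$. One observation makes the translation clean: such a factor is automatically boolean. Indeed, an easy induction from $T(\epsilon)=\mathtt{?}$ and the recursive definition of $T$ shows that $T(t)$ contains exactly one occurrence of $\mathtt{?}$; and the period relations would force any $\mathtt{?}$ in the window to coincide with a second, distinct occurrence (if $T(t)[i+j]=\mathtt{?}$ with $0\le j<2p$ then $T(t)[i+j+p]=\mathtt{?}$, and the symmetric argument, reading one period backward, covers $2p\le j<3p$), which is impossible. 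Hence if \texttt{Walnut} returns \texttt{FALSE}, no $T(t)$ contains a factor of exponent $\geq 3$; and since every boolean factor of an infinite Stewart word $T(\mathbf t)$ occurs already in some finite prefix $T(t)$ — apply Theorem~\ref{thm3} with $\ell$ chosen so that $3^{\ell-2}$ exceeds the factor length, or simply the Lemma preceding it — no $T(\mathbf t)$ contains a factor of exponent $\geq 3$ either.

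For the lower bound I would exhibit near-cubes of exponent $3-3^{-k}$ for every $k\geq 1$. To see they exist, write $\sigma$ for the choral morphism $\mathtt{0}\mapsto\mathtt{001}$, $\mathtt{1}\mapsto\mathtt{011}$; since $\sigma(\mathtt{0})$ and $\sigma(\mathtt{1})$ agree in their first and last letters only, an induction shows that $\sigma^k(\mathtt{0})$ and $\sigma^k(\mathtt{1})$ share a common prefix $P_k$ and a common suffix $S_k$, each of length $(3^k-1)/2$. Since $\mathtt{1001}$ is a factor of the choral word, so is $\sigma^k(\mathtt{1})\,\sigma^k(\mathtt{0})\,\sigma^k(\mathtt{0})\,\sigma^k(\mathtt{1})$, and inside it the factor $S_k\,\sigma^k(\mathtt{0})\,\sigma^k(\mathtt{0})\,P_k$ has period $3^k$ and length $2\cdot 3^k+2\cdot\frac{3^k-1}{2}=3^{k+1}-1$, hence exponent $3-3^{-k}$. (The naive candidates $\sigma^k(v)^2$ for squares $vv$ of the choral word only reach exponent tending to $5/2$, so the overlap into the neighbouring block across the unique hole is essential.) To make this work for \emph{every} $\mathbf t$ at once, I would restate it as a claim about finite prefixes: for every $t\in A^\ell$ and every power $x=3^k$ with $x$ small enough relative to $|t|$, the word $T(t)$ contains, within a controlled initial segment, a factor of period $x$ and length $3x-1$. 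This is expressible in \texttt{Walnut} using $\power$ and $\link$, and — because by the Lemma preceding Theorem~\ref{thm3} that initial segment of $T(t)$ is an initial segment of $T(\mathbf t)$ when $t$ is the appropriate length-$\ell$ truncation of $\mathbf t$ — a \texttt{TRUE} answer yields, for every $\mathbf t$ and every $k$, a factor of $T(\mathbf t)$ of exponent $3-3^{-k}$.

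The main obstacle is the lower bound, and specifically making it uniform in $\mathbf t$. The near-cube above is tied to the choral morphism; for a general pattern sequence the self-similar block $Y=T(t_1\cdots t_k)$ carries its unique hole in a position that depends on $t$, and the simplest analogue — two adjacent copies of $Y$ with the hole filled one way, extended into a third copy — collapses to exponent $2$ for sequences such as $\mathtt{e}^\omega$; the genuine period-$3^k$, length-$(3^{k+1}-1)$ repetition is still present but is not block-aligned and must be located through the finer digit structure of $T(\mathbf t)$. The real work is therefore in (i) recognizing that one needs length-$(3^{k+1}-1)$ factors of period $3^k$ rather than squares, and (ii) packaging the claim about finite prefixes so that its hypotheses (the length of $t$ versus the size of the factor) fit the $n\le 3^{\ell-2}$ regime of Theorem~\ref{thm3} together with the length-$3^{\ell-1}$ prefix stabilization of the Lemma; once this is set up correctly, the two \texttt{Walnut} verifications and the passage to infinite words are routine.
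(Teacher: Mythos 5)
Your proposal is correct and follows essentially the same route as the paper: a \texttt{Walnut} check that no finite prefix $T(t)$ contains a cube, plus a \texttt{Walnut}-verified lemma that every sufficiently long $T(t)$ contains a boolean factor of period $3^k$ and length $3^{k+1}-1$ (exponent $3-3^{-k}$), transferred to the infinite words via Theorem~\ref{thm3} and the preceding lemma. Your extra observations (the putative cube being automatically boolean, and the choral-morphism construction motivating the near-cubes) are sound but not needed beyond what the paper's argument already does.
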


\begin{proof}
Let us prove first that none of the infinite words contains
a cube.

We write down a first-order logical formula for the property
that some finite Stewart word coded by $t$ contains a cube, namely
$$ \exists i, p, t \ (p\geq 1) \ \wedge \ \forall j \ (j<2p) \implies
	T(t)[i+j]=T(t)[i+p+j] .$$

The translation into {\tt Walnut} is as follows:
\begin{verbatim}
eval hascube "?lsd_3 Ei,p,t p>=1 & Aj (j<2*p) => TP[t][i+j]=TP[t][i+j+p]":
# returns FALSE
# 1280 ms
\end{verbatim}
When we run this in {\tt Walnut}, we get the answer
{\tt FALSE}, so there are no cubes.  In fact,
this was already known, and
is a consequence of a recent result of Boccuto and Carpi \cite{Boccuto&Carpi:2020}.

Next, given an infinite sequence of Stewart patterns $\bf t$,
we have to demonstrate the existence of factors of exponent arbitrarily
close to $3$. 
It suffices to prove the following result.

\begin{lemma}
For all Toeplitz words $t$ of length $\ell \geq 4$ 
coding a length-$3^\ell$ word, there exists a boolean factor of $T(t)$ of period $3^{\ell-3}$
and length $3^{\ell-2}-1$, and hence of exponent $3-3^{3-\ell}$.
\label{lemma7}
\end{lemma}

\begin{proof}
We can use {\tt Walnut} to prove this, too. Again, we let $x = 3^{|t|}$, so the hypothesis $|t| \geq 4$ is expressed as $x \geq 81$.  
\begin{verbatim}
eval critexp "?lsd_3 At,x,p ($link(x,t) & p=x/27 & x>=81) =>
Ei i+3*p<=x & $boolean(i,3*p-1,t) & $faceq(i,i+p,2*p-1,t)":
# returns TRUE
# 116 ms
\end{verbatim}
and the output from Walnut is {\tt TRUE}.

\end{proof}
This completes the proof
of Theorem~\ref{thm5}.
\end{proof}

\section{Orders of squares}

As we have seen in the previous section, each infinite Stewart word contains factors of exponent arbitrarily close to $3$, and
in particular squares.  However, the orders of the possible squares
that can occur are greatly restricted, as the following result
shows.
\begin{theorem}
Every infinite Stewart word contains squares $xx$ with
$$|x| \in \{ 3^i \suchthat i \geq 0 \} \, \cup \, 
\{ 2 \cdot 3^i \suchthat i \geq 0 \},$$ and these are
the only possible orders of squares that occur.
\end{theorem}

\begin{proof}
The proof has two steps.  First, we compute the possible orders
of squares that occur in {\it any\/} Stewart word, with
the following {\tt Walnut} code.
\begin{verbatim}
def squareorder "?lsd_3 Et,i (n>=1) & $faceq(i,i+n,n,t)":
\end{verbatim}

The resulting automaton is depicted in Figure~\ref{fig4}, and
clearly accepts only the base-$3$ representations of the form
$0^*\{ 1, 2\}$.
\begin{figure}[H]
\begin{center}
\includegraphics[width=4in]{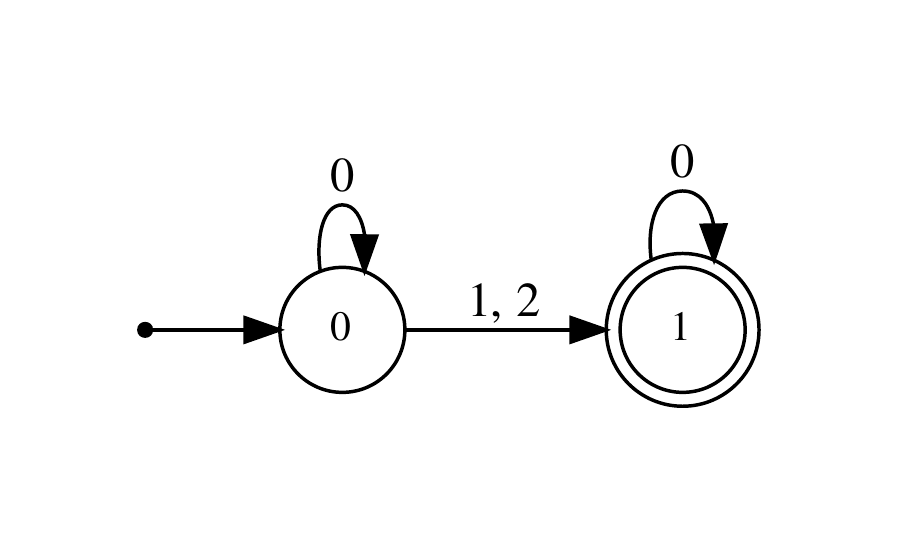}
\end{center}
\caption{Orders of squares that occur in Toeplitz words.}
\label{fig4}
\end{figure}

Next, we show that if we go far enough out in any given Stewart word, squares of the given orders actually do occur.  Using Theorem~\ref{thm3}, we see that if
a square of order $2\cdot 3^i$ occurs, it must occur in a prefix coded by
a word of length $t$, where $36 \cdot 3^i \leq 3^{|t|}$.   
\begin{verbatim}
eval all_squares_exist "?lsd_3 An,x,t ($link(x,t) & n<=x/36 & 
   Ey $power3(y) & (n=y|n=2*y)) => Ei $faceq(i,i+n,n,t)":   
\end{verbatim}
and {\tt Walnut} returns {\tt TRUE}.
\end{proof}

\section{Subword complexity}

In this section we discuss the subword complexity (aka factor complexity) of the infinite Stewart words.  The subword complexity
function $\rho(n)$ counts the number of distinct length-$n$
subwords occurring in an infinite word.   

\begin{theorem}
Every infinite word $T({\bf t})$ has exactly $2n$ distinct factors
of length $n$, for $n \geq 1$.
\end{theorem}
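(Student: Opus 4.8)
The natural approach is through right-special factors. Recall that $T({\bf t})$ is a word over $\{{\tt 0},{\tt 1}\}$ (and in the exceptional case of a suffix in $\{{\tt e},{\tt f}\}^\omega$ it is one of two binary words obtained by filling the unique {\tt ?}, to each of which the argument below applies unchanged). For any infinite binary word there is the elementary identity $\rho(n+1)-\rho(n)=\#\{\text{right-special factors of length } n\}$, because every factor has at least one right extension and on a binary alphabet a right-special factor has exactly two. Hence, once we know that $\rho(1)=2$ (both letters occur --- a one-line {\tt Walnut} check using $\boolean$) and that $T({\bf t})$ has \emph{exactly two} right-special factors of each length $n\ge 1$, telescoping yields $\rho(n)=2+2(n-1)=2n$. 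The entire task thus reduces to the ``exactly two right-special factors of each length'' statement.

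To make this decidable I would move to finite prefixes. Fix $n\ge 1$ and let $t$ be a prefix of ${\bf t}$ with $3^{|t|-2}\ge n+1$. On one hand every boolean factor of $T(t)$ is literally a factor of $T({\bf t})$, since the two words agree on the first $3^{|t|}$ positions except at the {\tt ?}'s of $T(t)$, which a boolean factor never touches. On the other hand every boolean factor of length $\le n+1$ that occurs in $T({\bf t})$ occurs already in $T(t)$: being boolean it sits inside some finite prefix $T(t_1\cdots t_L)$ with $L\ge|t|$, and Theorem~\ref{thm3} then places it in $T(t)$. Consequently $T({\bf t})$ and $T(t)$ have the same boolean factors of lengths $n$ and $n+1$, hence the same right-special factors of length $n$, independently of which such $t$ is chosen. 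So it suffices to prove: \emph{for every $n\ge 1$ and every finite Stewart pattern sequence $t$ with $3^{|t|-2}\ge n+1$, the word $T(t)$ has exactly two right-special boolean factors of length $n$.}

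This is a first-order statement about $T(t)$, so {\tt Walnut} can settle it. I would define $\rtspec(i,n,t)$ to assert that $T(t)[i..i+n-1]$ is boolean and that each of its two one-symbol right extensions occurs as a boolean factor of $T(t)$ (assembled from $\boolean$ and $\faceq$, with one bounded existential per extension). Writing $x=3^{|t|}$ via $\link(x,t)$, the hypothesis $3^{|t|-2}\ge n+1$ becomes $x\ge 9(n+1)$. The target is then $\forall n,t,x$: if $\link(x,t)\wedge n\ge 1\wedge x\ge 9(n+1)$ then both ``at least two'' --- $\exists i_1,i_2$ with $\rtspec(i_1,n,t)$, $\rtspec(i_2,n,t)$ and $\neg\faceq(i_1,i_2,n,t)$ --- and ``at most two'' --- $\forall i_1,i_2,i_3$, if all three satisfy $\rtspec(\cdot,n,t)$ then $\faceq(i_1,i_2,n,t)\vee\faceq(i_1,i_3,n,t)\vee\faceq(i_2,i_3,n,t)$ --- hold. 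I expect the answer {\tt TRUE}; combined with $\rho(1)=2$ and the telescoping identity this gives $\rho(n)=2n$ for all $n\ge 1$.

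I anticipate two difficulties. The computational one is the ``at most two'' clause: it nests three position variables together with $t$ over $\faceq$ and the extension-occurrence existentials, so the intermediate automata may be large --- perhaps on the scale of the $\cmp$ construction --- and the formula might have to be reorganized to stay within memory. The conceptual one is the exceptional case of a suffix in $\{{\tt e},{\tt f}\}^\omega$: there the limit word keeps a single {\tt ?} at a fixed position $k$, so after filling it there can be up to $n+1$ boolean length-$n$ factors straddling $k$ that are not boolean in any finite prefix, and one must verify --- again a finite {\tt Walnut} check, exploiting that the tail beyond $k$ reproduces the neighbourhood of $k$ --- that every such factor also occurs away from $k$, so that the prefix-based count of Theorem~\ref{thm3} is not an undercount. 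Once that is handled the finite-to-infinite reduction is airtight and the theorem follows.
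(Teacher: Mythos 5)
Your proposal follows essentially the same route as the paper: reduce the count $\rho(n)=2n$ to the statement that there are exactly two right-special boolean factors of each length $n\ge 1$, transfer the question to finite prefixes $T(t)$ via Theorem~\ref{thm3}, and verify ``at least two'' and ``never three'' with a {\tt Walnut} predicate $\rtspec$ built from $\boolean$ and $\faceq$. The only differences are presentational — you spell out the telescoping identity, the $\rho(1)=2$ base case, and the exceptional $\{{\tt e},{\tt f}\}^\omega$ case with its residual {\tt ?}, details the paper leaves implicit — so the argument is correct and matches the paper's proof.
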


\begin{proof}
A factor $x$ of an infinite binary word $\bf w$ is {\it right-special\/} if both $x0$ and $x1$ occur in $\bf w$.  To
prove the theorem,
we prove the equivalent result that there are exactly $2$ distinct
right-special boolean factors of length $n$, for all $n \geq 1$.  Let
${\bf t} \in A^\omega$.  
From Theorem~\ref{thm3} we know that if a factor of length $n$ occurs in $T({\bf t})$, it occurs in a prefix $t$ of length
$3^{\lceil n \rceil}$ of $\bf t$.
We then prove that for each finite word $t$,
if we look at the lengths of factors guaranteed to occur in $T(t)$ by 
Theorem~\ref{thm3} above, then there are always exactly two such that are right-special, and there
are never three.

     First we need to define a logical formula.
\begin{itemize}
    \item $\rtspec(i,n,t)$ expresses the assertion that
    $T(t)[i..i+n-1]$ consists entirely of boolean symbols
    and is right-special for the finite word $T(t)$.
\end{itemize}

Next, we create assertions that there are at least two
right-special factors of each length $\geq 1$ and there
are never three.
\begin{verbatim}
def rtspec "?lsd_3 $boolean(i,n,t) & Ej $boolean(j,n,t) & 
   (TP[t][i+n]!=TP[t][j+n]) & $faceq(i,j,n,t)":
# vars i,n,t
# 47 states, 234 ms

eval twors "?lsd_3 At,x,n ($link7(x,t) & n>=1 & n<=x/9) => 
   Ei1,i2 $rtspec(i1,n,t) & $rtspec(i2,n,t) & ~$faceq(i1,i2,n,t)":
# returns TRUE
# 178 ms

eval threers "?lsd_3 Ei1,i2,i3,t,n (n>=1) & $rtspec(i1,n,t) & $rtspec(i2,n,t) & 
$rtspec(i3,n,t) & ~$faceq(i1,i2,n,t) & ~$faceq(i2,i3,n,t) & ~$faceq(i1,i3,n,t)":
\end{verbatim}
Here {\tt twors} returns {\tt TRUE} and {\tt threers} returns
{\tt FALSE}; 
this completes the proof.
\end{proof}

Thus, the Stewart words provide yet more examples of the
class of sequences studied by Rote \cite{Rote:1994}.

\section{Common factors}

Recall that
$A = \{ {\tt a,b,c,d,e,f} \}$.
Let ${\bf t,u} \in A^\omega$.   We would like to characterize those pairs $({\bf t}, {\bf u})$ such that the Stewart words $T({\bf t})$ and $T({\bf u})$ contain arbitrarily large factors in
common.   

To do so we need to introduce the Hamming distance between
Stewart patterns.     We define $H(t,u)$ for
$t,u \in A$ as the number of positions
on which $T(t)$ and $T(u)$ differ.  For example,
$H({\tt a},{\tt b}) = 2$, since the patterns {\tt 01?} and
{\tt 10?} differ in the first two positions, but agree on the
third.

Next, define 
\begin{align*}
    X &:= \{ (t,u) \in A\times A \suchthat H(t,u) \in \{ 0,3 \} \};\\
    Y &:= \{ (t,u) \in A\times A \suchthat H(t,u)=2 \}.
\end{align*}
Then we have the following result.
\begin{theorem}
The infinite Stewart words $T({\bf t})$ and $T({\bf u})$ share
arbitrarily large factors in common if and only if
${\bf t} \times {\bf u} \in X^\omega$.    Furthermore, if $j$ is the
smallest index on which $({\bf t}\times {\bf u})[j] \not\in X$,
then $T({\bf t})$ and $T({\bf u})$ have no factors in common of
length $> 3^{j+2}$.
\end{theorem}

\begin{proof}
For the first claim, we use the following {\tt Walnut} code.
\begin{verbatim}
eval commonfac "?lsd_3 Ex,i1,i2 $link(x,t) & $link(x,u) & $cmp(i1,i2,x/9,t,u)":
# 4 states, 350259ms
\end{verbatim}
{\tt Walnut} then computes a
$4$-state automaton accepting 
those pairs of finite sequences having common factors.  In abbreviated form, this is depicted in Figure~\ref{cf}.
\begin{figure}[H]
\begin{center}
\includegraphics[width=5in]{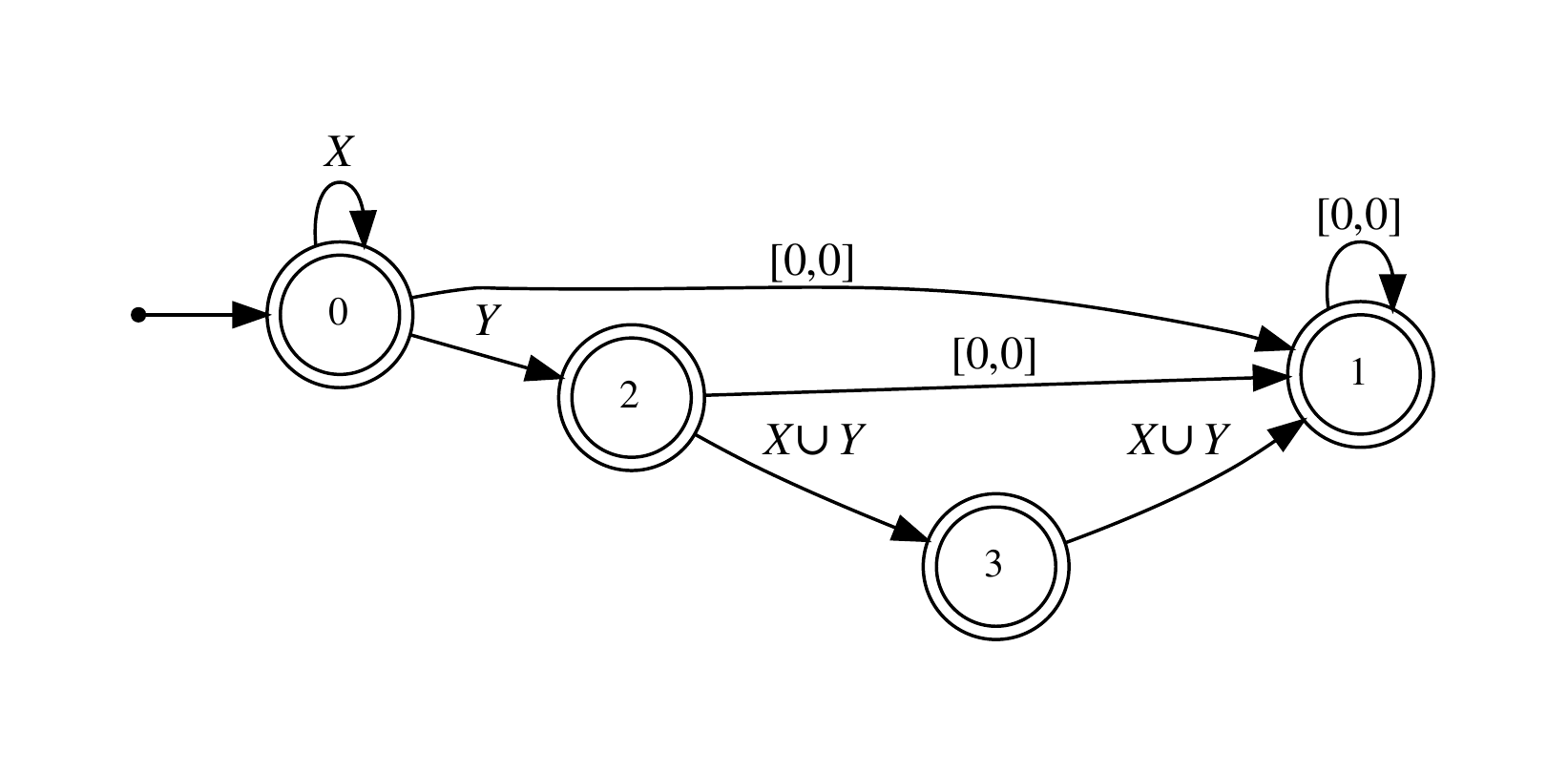}
\end{center}
\caption{Finite sequences having common factors.}
\label{cf}
\end{figure}
By inspection, we see that the only {\it infinite\/} sequences accepted by
this automaton are those in
$X^\omega$.

For the second claim, we need to build an automaton that identifies the first position where two finite
words $t, u$ of the same length differ.  This can be done in {\tt Walnut} as follows:
\begin{verbatim}
reg differ lsd_7 lsd_7 lsd_3
"([1,1,0]|[1,4,0]|[1,5,0]|[2,2,0]|[2,3,0]|[2,6,0]|
[3,2,0]|[3,3,0]|[3,6,0]|[4,1,0]|[4,4,0]|[4,5,0]|
[5,1,0]|[5,4,0]|[5,5,0]|[6,2,0]|[6,3,0]|[6,6,0])*
([1,2,1]|[1,3,1]|[1,6,1]|[2,1,1]|[2,4,1]|[2,5,1]|
[3,1,1]|[3,4,1]|[3,5,1]|[4,2,1]|[4,3,1]|[4,6,1]|
[5,2,1]|[5,3,1]|[5,6,1]|[6,1,1]|[6,4,1]|[6,5,1])
([1,1,0]|[1,2,0]|[1,3,0]|[1,4,0]|[1,5,0]|[1,6,0]|
[2,1,0]|[2,2,0]|[2,3,0]|[2,4,0]|[2,5,0]|[2,6,0]|
[3,1,0]|[3,2,0]|[3,3,0]|[3,4,0]|[3,5,0]|[3,6,0]|
[4,1,0]|[4,2,0]|[4,3,0]|[4,4,0]|[4,5,0]|[4,6,0]|
[5,1,0]|[5,2,0]|[5,3,0]|[5,4,0]|[5,5,0]|[5,6,0]|
[6,1,0]|[6,2,0]|[6,3,0]|[6,4,0]|[6,5,0]|[6,6,0])*
[0,0,0]*":
\end{verbatim}
Here $\differ(t,u,x)$ is true if $x = 3^j$ and
$(t[j],u[j]) \not\in X$.    Then we use
the {\tt Walnut} code 
\begin{verbatim}
eval compare "?lsd_3 At,u,x ($link(243*x,t) & $link(243*x,u) & $differ(t,u,x))
   => Ai Aj ~$cmp(i,j,9*x+1,t,u)":
# TRUE, 14401 ms
\end{verbatim}
\noindent which returns {\tt TRUE}.
\end{proof}

\section{Automatic Stewart words}

A paperfolding word is automatic if and only if its associated sequence of patterns is ultimately periodic (see \cite{Allouche&Shallit:2003}). An analogous result holds for Stewart words:

\begin{theorem}
A Stewart word $T({\bf t})$ is $3$-automatic if and only if $\bf t$ is ultimately periodic.
\end{theorem}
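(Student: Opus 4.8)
The plan is to prove the two implications separately: the ``if'' direction by building the automaton explicitly, and the ``only if'' direction by a ``de-Toeplitz-ification'' argument together with the finiteness of the $3$-kernel of an automatic sequence.

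For ``$\bf t$ ultimately periodic $\Rightarrow T({\bf t})$ is $3$-automatic'', write ${\bf t}=uw^{\omega}$ with $|u|=p$, $|w|=q$. Since composing $q$ successive Toeplitz pattern applications is again a single Toeplitz pattern application, the word $T(w)$ is a ``super-pattern'' of length $3^{q}$ containing exactly one \texttt{?} (each level contributes exactly one \texttt{?}, and expanding it at the next level again leaves exactly one), and likewise $T(u)$ has length $3^{p}$ with one \texttt{?}. From the construction one reads off the recursion
$$T(w^{\omega})[3^{q}n+j]=\begin{cases} c_{j},& j\neq j^{\ast},\\ T(w^{\omega})[n],& j=j^{\ast},\end{cases}$$
where the $c_{j}$ are boolean constants and $j^{\ast}$ is the \texttt{?}-position of $T(w)$, together with a one-step recursion expressing $T(uw^{\omega})[m]$ in terms of $m\bmod 3^{p}$ and $T(w^{\omega})[\lfloor m/3^{p}\rfloor]$. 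Reading $m$ in base $3$ (least significant digit first) these recursions are realized by a finite DFAO, so $T(uw^{\omega})$ is $3$-automatic. In the sole exceptional case of the footnote (a suffix in $\{{\tt e,f}\}^{\omega}$), $T({\bf t})$ differs from such a word in exactly one position, and altering one term of an automatic sequence keeps it automatic.

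For the converse I would argue the contrapositive, using two observations. \emph{(1) Readability.} Fix ${\bf t}=t_{1}t_{2}\cdots$ and look at the three subsequences $\big(T({\bf t})[3k+r]\big)_{k\ge 0}$, $r\in\{0,1,2\}$. The two residues corresponding to the boolean symbols of $t_{1}$ are constant (those positions are never touched again, and carry the two values $\tt 0$ and $\tt 1$), while the residue corresponding to the \texttt{?} of $t_{1}$ equals the Stewart word $T(t_{2}t_{3}\cdots)$, which is non-constant since every Stewart word has factor complexity $2n$. Hence $t_{1}$ is determined by $T({\bf t})$: the residue of the non-constant subsequence gives the \texttt{?}-position of $t_{1}$, the two constant values give the remaining symbols (in the exceptional case a residual \texttt{?} has been instantiated to a boolean, but this does not change which residue class is non-constant, so readability still holds); iterating, $\bf t$ is uniquely recoverable from $T({\bf t})$. \emph{(2) Kernel.} Unwinding the iteration yields, for every $i\ge 0$,
$$T(t_{i+1}t_{i+2}\cdots)=\big(T({\bf t})[3^{i}n+c_{i}]\big)_{n\ge 0},\qquad c_{i}=\sum_{0\le j<i}3^{j}r_{j+1}<3^{i},$$
where $r_{m}\in\{0,1,2\}$ is the \texttt{?}-position of $t_{m}$; in particular each tail's Stewart word lies in the $3$-kernel of $T({\bf t})$. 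If $T({\bf t})$ is $3$-automatic its $3$-kernel is finite, so by the pigeonhole principle there are $i<j$ with $T(t_{i+1}t_{i+2}\cdots)=T(t_{j+1}t_{j+2}\cdots)$, and by the uniqueness in (1) this forces $t_{i+1}t_{i+2}\cdots=t_{j+1}t_{j+2}\cdots$, i.e.\ $\bf t$ is ultimately periodic.

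The main obstacle is step (1): one must check carefully, directly from the Toeplitz construction, that exactly two of the three residue-class subsequences are constant and that the third is precisely the Stewart word of the shifted pattern sequence, and then verify that the special $\{{\tt e,f}\}^{\omega}$ case—where one \texttt{?} survives to the limit and has to be instantiated—does not spoil the recovery (it does not, because $t_{1}$ is recovered from \emph{which} residue class is non-constant, not from its values). The remaining ingredients—the super-pattern reduction, the arithmetic of the indices $c_{i}$, and the standard equivalence between $3$-automaticity and finiteness of the $3$-kernel—are routine.
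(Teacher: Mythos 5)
Your proof is correct, and it is worth separating the two directions when comparing it with the paper. For the converse your argument is, in substance, the paper's argument in different clothing: the paper reads off $t_0$ from the outputs $\tau(\delta(q_0,x))$ with $|x|=2$, moves to the state reached by the digit of the {\tt ?}-position, reads off $t_1$, and concludes by pigeonhole on the finitely many states of the DFAO; your ``readability'' step (the two constant residue classes mod $3$ give the boolean entries of $t_1$, the non-constant one gives the {\tt ?}-position, and that class is exactly $T(t_2t_3\cdots)$) is the same recovery, and your pigeonhole on the finite $3$-kernel is equivalent to the paper's pigeonhole on states, since the kernel sequences $\big(T({\bf t})[3^in+c_i]\big)_n$ correspond to the states reached in an lsd-first DFAO. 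For the ``if'' direction you take a genuinely different route: the paper forms the product of the Stewart automaton $M$ with a DFA for the regular language of prefixes of the ultimately periodic $\bf t$, whereas you collapse $w^\omega$ to a single super-pattern $T(w)$ of length $3^q$ with one {\tt ?} and verify directly that the resulting decimation recursions force a finite kernel; this is more hands-on and has the merit of not invoking the two-track automaton $M$ at all, at the cost of some routine verification the paper's one-line product construction avoids. A further point in your favor is that you explicitly treat the exceptional case of a suffix in $\{{\tt e,f}\}^\omega$, where the surviving {\tt ?} must be instantiated, both for automaticity (changing one term preserves it) and for the recovery step (the non-constant residue class is unchanged); the paper's proof passes over this case in silence.
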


\begin{proof}
     Suppose $\bf t$ is ultimately periodic.   Then the language $L$ of prefixes of $\bf t$ is regular.   We can then use the usual direct product construction for automata to create an automaton $M'$ from $M$
     by forcing the first components of inputs to lie in $L$. 
     Then $M'$ demonstrates that $T({\bf t})$ is
     $3$-automatic.
     
On the other hand, suppose  $T({\bf t})$ is $3$-automatic and computed by a DFAO 
     $$N = (Q, \Sigma_3, \Delta, \delta, q_0, \tau).$$ 
     It is easy to check that $t_0 := {\bf t}[0]$
     is uniquely specified by
     the values of $T({\bf t})[0..8]$; that is,
     by $\tau(\delta(q_0, x))$ for $|x|= 2$.
     Next, locate the position $p_0$
     of the {\tt ?} symbol in $t_0$, and determine
     $q_1 = \delta(q_0, p_0)$.   By examining the outputs of $N$ on words of length $2$, 
     starting from the state
     $q_1$, we can then uniquely determine $t_1$.   Continuing in this fashion,
     we generate ${\bf t}$ element by
     element.   Since $N$ has finitely many states,
     eventually we return to a state previously
     visited, at which point $\bf t$ becomes
     periodic.
\end{proof}

\begin{example}
As an example, consider the $3$-automatic sequence
computed by the DFAO (lsd-first) in Figure~\ref{fig7}.
\begin{figure}[H]
\begin{center}
\includegraphics[width=4in]{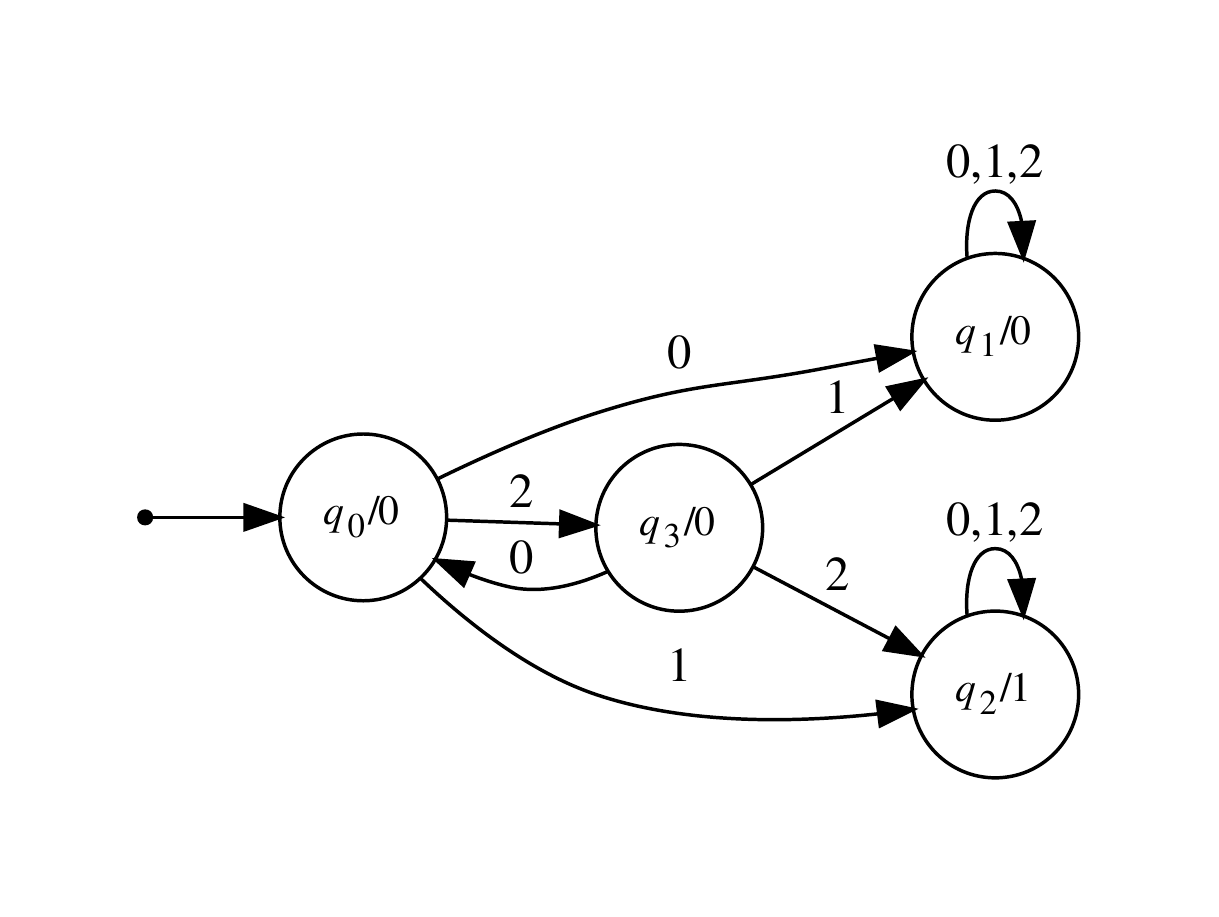}
\end{center}
\caption{A $3$-automatic sequence.}
\label{fig7}
\end{figure}
Our labeling procedure
labels $q_0$ with $\tt a$ and then $q_3$ with
$\tt d$, and then returns to state $q_0$.  So
${\bf t} = {\tt (ad)}^\omega$.
\end{example}

\section{Arithmetic progressions}

We are concerned with occurrences
of the words {\tt 01010} and
{\tt 10101} in arithmetic progressions in an infinite word $\bf x$; this means there
exist indices $i, m$ with
$m \geq 1$ such that
${\bf x}[i]={\bf x}[i+2m]={\bf x}[i+4m] = a$ and
${\bf x}[i+m]={\bf x}[i+3m] = \overline{a}$, for some
$a \in \{0,1\}$.

\begin{theorem}
No infinite Stewart word contains
either {\tt 01010} or
{\tt 10101} in arithmetic progression.
\end{theorem}

\begin{proof}
We can prove this 
with the following {\tt Walnut} statement asserting the existence of such an arithmetic progression.
\begin{verbatim}
eval arithprog "?lsd_3 Et,i,m (m>=1) & TP[t][i]=TP[t][i+2*m] &
TP[t][i]=TP[t][i+4*m] & TP[t][i+m]=TP[t][i+3*m] & TP[t][i]!=TP[t][i+m]":
# 607 ms
\end{verbatim}
and {\tt Walnut} returns
{\tt FALSE}.
\end{proof}

For more about arithmetic progressions in Toeplitz words, see
\cite{Hendriks&Dannenberg&Endrullis&Dow&Klop:2012}.

\section{Going further}

Everything we have done in this paper applies (with minor changes)
to any finite set of Toeplitz patterns of the same length, each
containing exactly one occurrence of the symbol {\tt ?}.  

\section*{Acknowledgments}

We thank Hamoon Mousavi, Aseem Raj Baranwal and Laindon C. Burnett for their work on {\tt Walnut} that made this paper possible.

\newpage
\appendix
\section*{Appendix}

This is the file {\tt TP.txt} defining the Stewart automaton.
The interested reader is recommended to put it in the directory
{\tt Word Automata Library} of
{\tt Walnut} before typing any
commands.

{\scriptsize
\begin{verbatim}
lsd_7 lsd_3
0 2
0 0 -> 3
1 0 -> 1
1 1 -> 2
1 2 -> 0
2 0 -> 2
2 1 -> 1
2 2 -> 0
3 0 -> 1
3 1 -> 0
3 2 -> 2
4 0 -> 2
4 1 -> 0
4 2 -> 1
5 0 -> 0
5 1 -> 1
5 2 -> 2
6 0 -> 0
6 1 -> 2
6 2 -> 1

1 0
0 0 -> 4
0 1 -> 1
0 2 -> 1
1 0 -> 1
1 1 -> 1
1 2 -> 1
2 0 -> 1
2 1 -> 1
2 2 -> 1
3 0 -> 1
3 1 -> 1
3 2 -> 1
4 0 -> 1
4 1 -> 1
4 2 -> 1
5 0 -> 1
5 1 -> 1
5 2 -> 1
6 0 -> 1
6 1 -> 1
6 2 -> 1

2 1 
0 0 -> 5
1 0 -> 2
1 1 -> 2
1 2 -> 2
2 0 -> 2
2 1 -> 2
2 2 -> 2
3 0 -> 2
3 1 -> 2
3 2 -> 2
4 0 -> 2
4 1 -> 2
4 2 -> 2
5 0 -> 2
5 1 -> 2
5 2 -> 2
6 0 -> 2
6 1 -> 2
6 2 -> 2

3 2
0 0 -> 3

4 0
0 0 -> 4

5 1
0 0 -> 5
\end{verbatim}
}
\end{document}